\newtheorem{theorem}{Theorem}[section]
\newtheorem{remark}[theorem]{Remark}
\newtheorem{definition}[theorem]{Definition}
\newtheorem{proposition}[theorem]{Proposition}
\newcommand{\dt}{\mathrm{d}t}
\renewcommand{\d}[1][]{\ensuremath{\,\mathrm{d}#1}}
\DeclareMathOperator{\D}{D}
\newcommand{\transp}{^\mathrm{T}}
\newcommand{\inv}{^{-1}}
\renewcommand{\vec}[1]{\ensuremath{\mbox{$\mathbf{#1}$}}}
\newcommand{\DG}{\overline{\mathsf{D}}}
\newcommand{\npe}{^{n+1}}
\newcommand{\npeh}{^{n+1/2}}
\newcommand{\n}{^{n}}
\newcommand{\eqcomma}{\ensuremath{\ \text{,}}}
\newcommand{\eqdot}{\ensuremath{\ \text{.}}}
\pgfplotsset{compat = newest}
\newlength\figH
\newlength\figW
\definecolor{color1}{RGB}{230, 159, 0}
\definecolor{color2}{RGB}{86, 180, 233}
\definecolor{color3}{RGB}{204, 121, 167}
\definecolor{color4}{RGB}{0, 158, 115}
\definecolor{color5}{RGB}{0, 114, 178}
\definecolor{color6}{RGB}{213, 94, 0}
\definecolor{color7}{RGB}{240, 228, 66}
\definecolor{colorblack}{RGB}{0, 0, 0}          
\definecolor{cgrey}{RGB}{128, 128, 128}
\definecolor{cdarkgrey}{RGB}{98, 98, 98}
\colorlet{corange}{color1}
\colorlet{ccyan}{color2}
\colorlet{cviolet}{color3}
\colorlet{cgreen}{color4}
\colorlet{cblue}{color5}
\colorlet{cred}{color6}
\colorlet{cyellow}{color7}
\colorlet{ctop}{color1}
\colorlet{ccenter}{color2}
\colorlet{cbottom}{color3}
\colorlet{cpath1}{color1}
\colorlet{cpath2}{color2}
\colorlet{cpath3}{color3}
\colorlet{cpath4}{color4}
\colorlet{cpath5}{color7}
\colorlet{cpathneutral}{cgrey}
\colorlet{cpathm90}{color1}
\colorlet{cpathm45}{color2}
\colorlet{cpath0}{color3}
\colorlet{cpath45}{color4}
\colorlet{cpath90}{color7}
\colorlet{cmtoa}{cred}
\colorlet{cHS01}{cgreen}
\colorlet{cHS02}{color4}
\colorlet{cHS03}{cblue}
\colorlet{cmtlinear}{cyellow}
\colorlet{cmtlinearcompliance}{cviolet}
\begin{document}

\begin{center}
    \fontsize{14}{16}{\bf
        Energy-consistent integration of mechanical systems \\ \vspace{0.4ex}
        based on Livens principle
    }
\end{center}

\begin{center}
    \normalsize{
        \bf{
            \underline{Philipp L. Kinon},
            Peter Betsch}
    }
\end{center}

\begin{center}
    \begin{tabular}{c}
        Institute of Mechanics                        \\
        Karlsruhe Institute of Technology (KIT)       \\
        Otto-Ammann-Platz 9, 76131 Karlsruhe, Germany \\
        {[philipp.kinon, peter.betsch]@kit.edu  }     \\
    \end{tabular} \\
\end{center}


\begin{quote} 

    \section*{ABSTRACT}
    In this work we make use of Livens principle (sometimes also referred to as Hamilton-Pontryagin principle) in order to obtain a novel structure-preserving integrator for mechanical systems. In contrast to the canonical Hamiltonian equations of motion, the Euler-Lagrange equations pertaining to Livens principle circumvent the need to invert the mass matrix. This is an essential advantage with respect to singular mass matrices, which can yield severe difficulties for the modelling and simulation of multibody systems. Moreover, Livens principle unifies both Lagrangian and Hamiltonian viewpoints on mechanics. Additionally, the present framework avoids the need to set up the system's Hamiltonian. The novel scheme algorithmically conserves a general energy function and aims at the preservation of momentum maps corresponding to symmetries of the system. We present an extension to mechanical systems subject to holonomic constraints. The performance of the newly devised method is studied in representative examples.

    \textbf{Keywords:} Livens principle, Structure-preserving integration, Singular mass matrix, Energy-momentum method, Discrete gradients, Constrained mechanical systems.

\end{quote}

\section{INTRODUCTION}
Over time, two approaches to the description of dynamical systems have been developed: The well-known Lagrangian and Hamiltonian formalisms both consider descriptive energetic scalars and deploy certain operations on them to generate the system's equations of motion. However, another formulation, which unifies both frameworks by means of independent position, velocity and momentum quantities has been proposed by Livens \cite{livens_hamiltons_1919}. This Livens principle has been recently taken up (cf. \cite{bou-rabee_hamiltonpontryagin_2009,holm_geometric_2011}) under the name of Hamilton-Pontryagin principle due to its close relation to the Pontryagin principle from the field of optimal control \cite{pontryagin_mathematical_1962}. Livens principle allows for an advantageous universal description due to its mixed character.

Additionally, when simulating multibody system dynamics, the mathematical properties of the formulation heavily depend on the particular choice of coordinates. In certain frameworks, the mass matrix can even become singular or does depend on the coordinates themselves (cf. \cite{udwadia06, garcia_de_jalon13}). These cases pose some difficulties for Hamiltonian methods, since the inversion of the mass matrix might not be admissible.

Thus, in the present work, we propose the application of Livens principle to multibody system dynamics which circumvents the problem of singular mass-matrices. This extends previous works \cite{kinon_ggl_2023, kinon_structure_2023}, in which a novel variational principle has been derived based on Livens principle, with respect to more general mass matrices.

In recent times, the focus of scientific research has become to derive numerical integration methods, which are capable of solving the equations of motion approximately. Therefore, the class of structure-preserving integrators seeks to inherit the conservation principles of dynamical systems in a discrete sense (cf. monographs such as Hairer et al. \cite{hairer_geometric_2006}). In the field of mechanics, structure-preserving integration schemes can be mainly divided into two different groups: symplectic methods, of which variational integrators are an important subclass, and energy-momentum integrators.

In the present contribution we focus on the approach of energy-momentum methods: Based on the direct discretization of the Livens equations of motion in conjunction with Gonzalez discrete derivatives \cite{gonzalez_time_1996}, we obtain an energy-consistent time-stepping scheme. This method is able to take into account both singular and configuration-dependent mass matrices. It achieves discrete energy preservation and satisfies holonomic constraint functions exactly.

The outline of the present work is as follows: In Sec.~\ref{sec_ham} the fundamentals of Hamiltonian dynamics are briefly recapitulated. Thereafter, we introduce Livens principle in Sec.~\ref{sec_livens} and extend it to holonomically constrained systems in Sec.~\ref{sec_constraints}. In Sec.~\ref{sec_discrete} we discretize the deduced equations of motion to obtain an energy-consistent time-stepping scheme, which is used in Sec.~\ref{sec_examples} for the simulation of representative example problems. Sec.~\ref{sec_conclusion} concludes the findings and gives a short outlook for future research.

\section{HAMILTONIAN DYNAMICS}\label{sec_ham}
Consider the motion of a dynamical system with $d$ degrees of freedom with positions $\vec{q} \in Q$ and velocities $\dot{\vec{q}} \in T_\mathbf{q} Q$, where $ Q \subset \mathbb{R}^d$ denotes the configuration space and $T_\mathbf{q} Q \subset \mathbb{R}^d$ the tangent space to $Q$ at $\vec{q}$.
\begin{definition}
    The system's Lagrangian $L \colon TQ \to \mathbb{R}$ is given by
    \begin{align}
        L(\vec{q},\dot{\vec{q}}) = T(\vec{q},\dot{\vec{q}}) - V(\vec{q}) = \frac{1}{2} \dot{\vec{q}} \cdot \vec{M}(\vec{q}) \dot{\vec{q}} -V(\vec{q}), \label{lagrangefunction}
    \end{align}
    where $T:TQ \rightarrow \mathbb{R}$ is the kinetic energy, $\vec{M}(\vec{q}) \in \mathbb{R}^{d \times d}$ denotes the symmetric and positive-definite mass matrix and $V: Q \rightarrow \mathbb{R}$ is a potential function.
\end{definition}

The total energy of the system is given by the sum of kinetic and potential energy, i.e.
\begin{align} \label{total_energy}
    E_\mathrm{tot}(\vec{q},\vec{v}) = T(\vec{q},\vec{v}) + V(\vec{q}) .
\end{align}

\begin{definition}
    The Hamiltonian $H: T^*Q \rightarrow \mathbb{R} $ is obtained by employing a Legendre transformation $\mathbb{F}_L: (\vec{q},\dot{\vec{q}}) \mapsto (\vec{q},\vec{p})$, where the conjugate momenta are defined throught the fiber derivative
    \begin{align}
        \vec{p}:=D_2 L(\vec{q}, \dot{\vec{q}})
    \end{align}
    such that
    \begin{align}
        H(\vec{q},\vec{p}) =  \vec{p} \cdot \dot{\vec{q}}(\vec{q},\vec{p}) - L(\vec{q},\dot{\vec{q}}(\vec{q},\vec{p})) . \label{legendre}
    \end{align}
\end{definition}
Given a Lagrangian \eqref{lagrangefunction}, the Hamiltonian reads
\begin{align}
    H(\vec{q},\vec{p}) = T(\vec{q},\vec{p}) + V(\vec{q}) = \frac{1}{2} \vec{p} \cdot \vec{M}(\vec{q})^{-1} \vec{p} + V(\vec{q}) .\label{hamiltonian}
\end{align}
Eventually, the well-known Hamiltonian equations of motion appear in their canonical form
\begin{subequations}
    \label{canonical_Ham_EL}
    \begin{align}
        \dot{\vec{q}} & = \D_2 H(\vec{q},\vec{p})   \\
        \dot{\vec{p}} & = - \D_1 H(\vec{q},\vec{p})
    \end{align}
\end{subequations}
which can be specified for the systems with Lagrangian \eqref{lagrangefunction} as
\begin{subequations}
    \label{canonical_Ham_EL_2}
    \begin{align}
        \dot{\vec{q}} & = \vec{M}(\vec{q})\inv \vec{p},  \label{canonical_Ham_EL_2-1}             \\
        \dot{\vec{p}} & = - \D_1 T(\vec{q},\vec{p}) - \D V(\vec{q}), \label{canonical_Ham_EL_2-2}
    \end{align}
\end{subequations}

Note that this approach requires to set up the Hamiltonian which in turn requires the inversion of the mass matrix. Thus, we now want to introduce a different and more general framework.

\section{LIVENS PRINCIPLE}\label{sec_livens}
From Hamilton's principle of least action
\begin{align}
    \delta \int_0^T L(\vec{q},\dot{\vec{q}}) \dt = 0
\end{align}
one can proceed by allowing the velocities to be independent variables $\vec{v} \in T_\mathbf{q} Q$. The kinematic relation $\dot{\vec{q}} = \vec{v}$ can be enforced by means of Lagrange multipliers $\vec{p}  \in T^*_\mathbf{q} Q$.

\begin{definition}
    The augmented action integral for Livens principle reads
    \begin{align}
        S(\vec{q},\vec{v},\vec{p}) = \int_0^T \left[ L(\vec{q},\vec{v}) + \vec{p} \cdot(\dot{\vec{q}} - \vec{v}) \right] \, \dt \label{livens_principle}
    \end{align}
    for the simulation time $t \in [0, T]$.
\end{definition}
By stating the stationary condition
\begin{align}
    \delta S( \vec{q},\vec{v}, \vec{p}) = 0
\end{align}
and executing the variations with respect to every independent variable, one obtains the equations of motion
\begin{subequations}
    \label{HP_EL}
    \begin{align}
        \dot{\vec{q}} & = \vec{v} \label{HP_EL1} ,                  \\
        \dot{\vec{p}} & = \D_1 L(\vec{q},\vec{v})  \label{HP_EL2} , \\
        \vec{p}       & = \D_2 L(\vec{q},\vec{v}) \label{HP_EL3},
    \end{align}
\end{subequations}
where the arbitraryness of variations and standard endpoint conditions $\delta \vec{q}(t=0) = \delta \vec{q}(t=T) = \vec{0}$ have been taken into account.

With regard to \eqref{HP_EL3} the Lagrange multiplier $\vec{p}$ can be identified as the conjugate momentum. Thus, Livens principle automatically accounts for the Legendre transformation \eqref{legendre}, whereas within the framework of Hamiltonian dynamics momentum variables have to be defined a priori using the fiber derivative. In analogy to \eqref{legendre}, a generalized energy function can be introduced as follows.
\begin{definition}
    Given a Lagrangian \eqref{lagrangefunction} the quantity
    \begin{align}
        E(\vec{q},\vec{v},\vec{p}) = \vec{p} \cdot \vec{v} - L(\vec{q},\vec{v}) \label{generalized_energy}
    \end{align}
    is called a generalized energy function.
\end{definition}

\begin{remark}[Alternative form of Livens principle]
    Using the generalized energy function, \eqref{livens_principle} can be recast as
    \begin{align}
        S(\vec{q},\vec{v},\vec{p}) = \int_0^T \left[ \vec{p} \cdot \dot{\vec{q}} - E(\vec{q},\vec{v},\vec{p}) \right] \dt ,
    \end{align}
    such that the corresponding Euler-Lagrange equations \eqref{HP_EL} emerge as
    \begin{subequations}
        \label{HP_EL_new}
        \begin{align}
            \dot{\vec{q}} & = \D_3 E(\vec{q},\vec{v},\vec{p}) ,  \\
            \dot{\vec{p}} & = -\D_1 E(\vec{q},\vec{v},\vec{p}) , \\
            \vec{0}       & = \D_2 E(\vec{q},\vec{v},\vec{p}) .
        \end{align}
    \end{subequations}
\end{remark}

\begin{proposition}
    The generalized energy \eqref{generalized_energy} is a conserved quantity along solutions of \eqref{HP_EL}.
\end{proposition}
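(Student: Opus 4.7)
The plan is to differentiate the generalized energy $E(\vec{q},\vec{v},\vec{p})$ with respect to time along a solution of \eqref{HP_EL} and show that the result vanishes. I would first apply the chain rule to obtain
\begin{align*}
\frac{\d}{\dt} E(\vec{q},\vec{v},\vec{p}) = \D_1 E \cdot \dot{\vec{q}} + \D_2 E \cdot \dot{\vec{v}} + \D_3 E \cdot \dot{\vec{p}} \eqcomma
\end{align*}
and then compute each partial derivative directly from the definition \eqref{generalized_energy}, yielding $\D_1 E = -\D_1 L(\vec{q},\vec{v})$, $\D_2 E = \vec{p} - \D_2 L(\vec{q},\vec{v})$, and $\D_3 E = \vec{v}$.

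Next I would substitute the Livens equations of motion \eqref{HP_EL}: the kinematic relation \eqref{HP_EL1} supplies $\dot{\vec{q}} = \vec{v}$, the momentum evolution \eqref{HP_EL2} supplies $\dot{\vec{p}} = \D_1 L(\vec{q},\vec{v})$, and the Legendre relation \eqref{HP_EL3} gives $\vec{p} = \D_2 L(\vec{q},\vec{v})$. The last of these is precisely what forces the coefficient $\D_2 E$ of the undetermined acceleration $\dot{\vec{v}}$ to vanish, so the awkward $\dot{\vec{v}}$ term drops out automatically. What remains is
\begin{align*}
\frac{\d}{\dt} E = -\D_1 L(\vec{q},\vec{v}) \cdot \vec{v} + \vec{v} \cdot \D_1 L(\vec{q},\vec{v}) = 0 \eqcomma
\end{align*}
which establishes the conservation claim.

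Alternatively, and perhaps more elegantly, I would use the recast equations \eqref{HP_EL_new}. Then the chain rule immediately gives
\begin{align*}
\frac{\d}{\dt} E = (-\dot{\vec{p}}) \cdot \dot{\vec{q}} + \vec{0} \cdot \dot{\vec{v}} + \dot{\vec{q}} \cdot \dot{\vec{p}} = 0 \eqcomma
\end{align*}
where the middle term vanishes by the algebraic constraint $\D_2 E = \vec{0}$. There is no real obstacle here: the only subtlety worth flagging is that $\dot{\vec{v}}$ need not be well defined a priori (since $\vec{v}$ is an independent variable of the mixed formulation, not necessarily differentiable), but this is rendered moot by $\D_2 E = \vec{0}$, which annihilates that term regardless of the regularity of $\vec{v}(t)$.
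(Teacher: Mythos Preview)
Your proof is correct and follows essentially the same approach as the paper: apply the chain rule to $E$, insert the partial derivatives $\D_1 E=-\D_1 L$, $\D_2 E=\vec{p}-\D_2 L$, $\D_3 E=\vec{v}$, substitute \eqref{HP_EL}, and observe that the $\dot{\vec{v}}$ term vanishes by \eqref{HP_EL3} while the remaining two terms cancel. Your alternative via \eqref{HP_EL_new} and your remark on the regularity of $\vec{v}$ are nice additions but do not depart from the paper's line of argument.
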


\begin{proof}
    We compute
    \begin{align}
        \frac{\d}{\dt}E(\vec{q},\vec{v},\vec{p}) & = \D_1 E(\vec{q},\vec{v},\vec{p}) \cdot \dot{\vec{q}} + \D_2 E(\vec{q},\vec{v},\vec{p}) \cdot \dot{\vec{v}} + \D_3 E(\vec{q},\vec{v},\vec{p}) \cdot \dot{\vec{p}} \\
                                                 & = -\D_1 L(\vec{q},\vec{v}) \cdot \vec{v} + (\vec{p} - \D_2 L(\vec{q},\vec{v})) \cdot \dot{\vec{v}} + \vec{v} \cdot \D_1 L(\vec{q},\vec{v}) ,
    \end{align}
    where the partial derivatives of $E$ have been used and \eqref{HP_EL} has been inserted. Thus, the second term vanishes while the first and third term cancel each other out, such that $\d E / \dt = 0$.
\end{proof}

\begin{remark}[Equivalence with Lagrangian dynamics]
    Note that after reinserting \eqref{HP_EL3} into \eqref{HP_EL2} and making use of \eqref{HP_EL1}, Livens principle traces back to the Lagrangian equations of the second kind
    \begin{align}
        \frac{\d}{\dt} \left( \D_2 L(\vec{q},\dot{\vec{q}}) \right) - \D_1 L(\vec{q},\dot{\vec{q}}) = \vec{0} .
    \end{align}
\end{remark}

\begin{remark}[Equivalence with Hamiltonian dynamics]
    For mechanical systems with Lagrangian \eqref{lagrangefunction}, relation \eqref{HP_EL3} yields $\vec{p}=M(\vec{q})\vec{v}$, so that, if $M$ is nonsingular, \eqref{HP_EL1} and \eqref{HP_EL2} directly lead to the canonical Hamiltonian equations of motion \eqref{canonical_Ham_EL_2}. Moreover, in this case, the generalized energy function can be identified as the Hamiltonian, such that $E(\vec{q},\vec{v}(\vec{q},\vec{p}),\vec{p}) = H(\vec{q},\vec{p})$.
\end{remark}

Initially termed Livens principle (cf.\ Sec.\ 26.2 in Pars \cite{pars_treatise_1966}) after G.H. Livens, who proposed this functional for the first time (cf.\ Livens \cite{livens_hamiltons_1919}), Marsden and co-workers (e.g. \cite{bou-rabee_hamiltonpontryagin_2009}) coined the name \textit{Hamilton-Pontryagin principle} for this functional due to its close relation to the classical \textit{Pontryagin principle} from the field of optimal control. Due to its mixed character with three independent fields ($q,v,p$), it resembles the \textit{Hu-Washizu principle} from the area of elasticity theory. More recently, preliminary works \cite{kinon_ggl_2023, kinon_structure_2023} by the authors of this constribution have taken up Livens principle in order to obtain a novel variational principle, the \textit{GGL principle} which extends Livens principle to holonomically constrained systems. In these works, new structure-preserving integrators for systems with constant mass matrix have been developed.

\section{HOLONOMICALLY CONSTRAINED SYSTEMS}\label{sec_constraints}

Assume that the coordinates $\vec{q}$ are redundant due to the presence of $m$ independent scleronomic, holonomic constraints.
Arranging the constraint functions $g_k \colon \mathbb{R}^d \rightarrow \mathbb{R}$ ($k=1,\ldots,m$) in a column vector $\vec{g} \in \mathbb{R}^m$, the constraints are given by
\begin{align} \label{eq:constraints-holonomic}
    \vec{g}(\vec{q}) = \vec{0}  \eqdot
\end{align}
The constraint functions are assumed to be independent, so that the constraint Jacobian $ \D \vec{g}(\vec{q})$ has rank $m$. Note that the constraints give rise to the configuration manifold
\begin{equation}
    Q = \left\{ \vec{q} \in \mathbb{R}^d \;\vert\; \vec{g}(\vec{q})=\vec{0} \right\} \eqdot
\end{equation}

Correspondingly, Livens principle can be augmented to account for the constraints \eqref{eq:constraints-holonomic}. Accordingly, the functional assumes the form
\begin{equation} \label{eq:functional_Livens_holonom}
    \hat{S}(\vec{q},\vec{v},\vec{p},{\bm \lambda}) = {S}(\vec{q},\vec{v},\vec{p}) + \int\nolimits_0^T {\bm \lambda}\cdot\vec{g}(\vec{q})\, \dt \eqcomma
\end{equation}
where ${\bm \lambda}\in\mathbb{R}^m$ represents Lagrange multipliers for the enforcement of the constraints.
Imposing the stationary conditions on functional \eqref{eq:functional_Livens_holonom}, the resulting Euler-Lagrange equations are obtained as
\begin{subequations}
    \label{HP_EL_cons}
    \begin{align}
        \dot{\vec{q}} & = \vec{v} \eqcomma \label{HP_EL_cons1}                                                           \\
        \dot{\vec{p}} & = \D_1 L(\vec{q},\vec{v}) - \D \vec{g}(\vec{q})\transp {\bm \lambda}\eqcomma \label{HP_EL_cons2} \\
        \vec{p}       & = \D_2 L(\vec{q},\vec{v}) \label{HP_EL_cons3} \eqcomma                                           \\
        \vec{0}       & = \vec{g}(\vec{q})  \eqdot \label{HP_EL_cons4}
    \end{align}
\end{subequations}

\begin{remark}
    Due to the constraints, coordinates become redundant. Hence, the mass matrix $\vec{M}(\vec{q})$ contained in the Lagrangian can be singular such that the mass matrix is positive semidefinite in general.
\end{remark}
Note that \eqref{eq:constraints-holonomic} at every time induces the secondary constraints or hidden velocity constraints $\d \vec{g} / \dt = \vec{0}$. Taking into account \eqref{HP_EL_cons1}, one obtains
\begin{align}
    \D\vec{g}(\vec{q})\vec{v} = \vec{0} \label{secondary_constraints}
\end{align}
It is well-known that the above differential algebraic equations (DAEs) \eqref{HP_EL_cons} have differentiation index 3 (see, for example, \cite{kunkel_differential-algebraic_2006}), which can lead to numerical instabilities. An index reduction using the classical Gear-Gupta-Leimkuhler stabilization (\cite{gear_automatic_1985}) or an expansion of \eqref{eq:functional_Livens_holonom} to account also for the hidden constraints (cf. \textit{GGL principle} in \cite{kinon_ggl_2023, kinon_structure_2023}) can circumvent the arising problems.

We now want to demonstrate that the conservation principle also holds in the constrained setting.

\begin{proposition}
    The generalized energy \eqref{generalized_energy} is a conserved quantity along solutions of \eqref{HP_EL_cons}.
\end{proposition}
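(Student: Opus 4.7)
The plan is to mimic the proof of the unconstrained proposition and show that the only new contribution vanishes thanks to the hidden velocity constraint. First I would differentiate the generalized energy \eqref{generalized_energy} with respect to time, using the chain rule exactly as in the previous proof:
\begin{align*}
\frac{\d}{\dt}E(\vec{q},\vec{v},\vec{p}) = \D_1 E \cdot \dot{\vec{q}} + \D_2 E \cdot \dot{\vec{v}} + \D_3 E \cdot \dot{\vec{p}} \eqcomma
\end{align*}
and evaluate the partial derivatives from $E=\vec{p}\cdot\vec{v}-L(\vec{q},\vec{v})$, namely $\D_1 E = -\D_1 L$, $\D_2 E = \vec{p}-\D_2 L$ and $\D_3 E = \vec{v}$.

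Next, I would substitute the constrained Euler-Lagrange equations \eqref{HP_EL_cons}. As before, \eqref{HP_EL_cons3} kills the $\D_2 E$-term, and \eqref{HP_EL_cons1} turns $-\D_1 L\cdot\dot{\vec{q}}$ into $-\D_1 L(\vec{q},\vec{v})\cdot\vec{v}$. The decisive new piece comes from \eqref{HP_EL_cons2}, which contributes
\begin{align*}
\vec{v}\cdot\dot{\vec{p}} = \vec{v}\cdot\D_1 L(\vec{q},\vec{v}) - \vec{v}\cdot\D\vec{g}(\vec{q})\transp{\bm\lambda} \eqdot
\end{align*}
The first two terms cancel pairwise with $-\D_1 L\cdot\vec{v}$ just as in the unconstrained case, leaving only $-(\D\vec{g}(\vec{q})\vec{v})\cdot{\bm\lambda}$ as the residual.

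The key remaining step — and what I expect to be the only real subtlety — is to argue that this residual vanishes. This is not automatic from the algebraic system \eqref{HP_EL_cons}; it requires invoking the hidden velocity constraint \eqref{secondary_constraints}, which follows by differentiating the holonomic constraint \eqref{HP_EL_cons4} in time and using \eqref{HP_EL_cons1}. Since every actual solution of \eqref{HP_EL_cons} satisfies $\vec{g}(\vec{q}(t))=\vec{0}$ for all $t$, we may differentiate to obtain $\D\vec{g}(\vec{q})\vec{v}=\vec{0}$, so the ${\bm\lambda}$-term drops out and $\d E/\dt = 0$.

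I would close the proof with a brief remark that this is exactly the point where the index-3 nature of the DAE enters: energy conservation at the continuous level relies on a derived (hidden) constraint rather than on the index-3 system itself, which motivates the later use of the Gear-Gupta-Leimkuhler/GGL stabilization already announced in the surrounding text.
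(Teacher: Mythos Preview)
Your proof is correct and follows essentially the same route as the paper's: chain-rule differentiation of $E$, substitution of the constrained Euler--Lagrange equations \eqref{HP_EL_cons}, cancellation of the unconstrained terms, and elimination of the remaining constraint-force contribution via the hidden velocity constraint \eqref{secondary_constraints}. Your added remark on why the secondary constraint is available (time-differentiation of \eqref{HP_EL_cons4} along solutions) and the connection to the index-3 structure is a welcome clarification but not a departure from the paper's argument.
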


\begin{proof}
    As above, we compute
    \begin{align}
        \frac{\d}{\dt}E(\vec{q},\vec{v},\vec{p}) & = \D_1 E(\vec{q},\vec{v},\vec{p}) \cdot \dot{\vec{q}} + \D_2 E(\vec{q},\vec{v},\vec{p}) \cdot \dot{\vec{v}} + \D_3 E(\vec{q},\vec{v},\vec{p}) \cdot \dot{\vec{p}}                        \\
                                                 & = -\D_1 L(\vec{q},\vec{v}) \cdot \vec{v} + (\vec{p} - \D_2 L(\vec{q},\vec{v})) \cdot \dot{\vec{v}} + \vec{v} \cdot ( \D_1 L(\vec{q},\vec{v}) - \vec{G}(\vec{q})\transp {\bm \lambda} ) ,
    \end{align}
    where the partial derivatives of $E$ have been used and \eqref{HP_EL} has been inserted. Thus, the second term vanishes while the first and third term cancel each other out. Eventually,
    \begin{align}
        \frac{\d}{\dt}E(\vec{q},\vec{v},\vec{p}) & = {\bm \lambda} \cdot (\D\vec{g}(\vec{q})\vec{v})
    \end{align}
    remains. Having in mind that \eqref{secondary_constraints} holds, also this last part vanished, i.e. $\d E / \dt = 0$.
\end{proof}

\section{STRUCTURE-PRESERVING DISCRETIZATION}\label{sec_discrete}
In this work, a novel integration method is proposed, which conserves first integrals of the equations of motion at hand, e.g. the generalized energy function $E$. This scheme results from a direct discretization of the Euler-Lagrange equations \eqref{HP_EL_cons} emanating from constrained Livens principle. Particularly, discrete derivatives in the sense of Gonzalez \cite{gonzalez_time_1996} are taken into account and are defined as follows.

\begin{definition}[Gonzalez discrete gradient]
    For a given function $f : \mathbb{R}^m \rightarrow \mathbb{R} $ the discrete derivative $ \DG : \mathbb{R}^m \times \mathbb{R}^m \rightarrow \mathbb{R}^m $ according to \cite{gonzalez_time_1996} is defined by
    \begin{align} \label{eq:DD-Gonzalez}
        \DG f(\vec{x},\vec{y}) = \D f(\vec{z}) + \frac{f(\vec{y}) - f(\vec{x}) -  \D f(
            \vec{z}) \cdot \vec{v}  }{ ||\vec{v}||^2 } \vec{v}
    \end{align}
    with $\vec{z} = \frac{1}{2}(\vec{x} +\vec{y})$, $\vec{v} = \vec{y} - \vec{x}$ and $|| \bullet ||$ denoting the standard Euclidean norm on $\mathbb{R}^m$.
\end{definition}

Those discrete derivatives represent second-order approximation to the exact gradients such that discrete derivatives satisfy the directionality condition
\begin{align}
    \DG f(\vec{x},\vec{y}) \cdot \vec{v} = f(\vec{y}) -f (\vec{x}) \label{directionality}
\end{align}
as well as the consistency condition $\DG f(\vec{x},\vec{y}) = \D f(\vec{z}) + O(||\vec{v}||)$. Eventually, we propose the energy-consistent integrator, governed by
\begin{subequations}
    \label{HP_EL_discrete}
    \begin{align}
        \vec{q}\npe - \vec{q}\n & = h \, \vec{v}\npeh \label{HP_EL1_discrete} ,                                                           \\
        \vec{p}\npe - \vec{p}\n & = h \DG_1 L(\vec{q},\vec{v})  \label{HP_EL2_discrete} - h\, \sum_{k=1}^{m} \lambda_k \DG g_k(\vec{q}) , \\
        \vec{p}\npeh            & = \DG_2 L(\vec{q},\vec{v}) \label{HP_EL3_discrete} ,                                                    \\
        \vec{g}(\vec{q}\npe)    & = \vec{0} \label{HP_EL4_discrete} ,
    \end{align}
\end{subequations}
where $h$ denotes the time step size, $\vec{v}\npeh = \frac{1}{2}(\vec{v}\n + \vec{v}\npe)$ and $\vec{p}\npeh = \frac{1}{2}(\vec{p}\n + \vec{p}\npe)$. Moreover, $\DG_i$ denotes the partitioned discrete derivative with respect to the $i$-th argument. Directionality condition \eqref{directionality} extends to partitioned versions such that
\begin{align}
    \DG_1 T(\vec{q},\vec{v}) \cdot \Delta \vec{q} + \DG_2 T(\vec{q},\vec{v}) \cdot \Delta \vec{v} = T(\vec{q}\npe,\vec{v}\npe) - T(\vec{q}\n,\vec{v}\n) ,
\end{align}
where $\Delta \vec{q} = \vec{q}\npe - \vec{q}\n$ and $\Delta \vec{v} = \vec{v}\npe - \vec{v}\n$.

\begin{remark}
    Assuming \eqref{lagrangefunction}, the present method \eqref{HP_EL_discrete} can be specified as
    \begin{subequations}
        \label{HP_EL_discrete_2}
        \begin{align}
            \vec{q}\npe - \vec{q}\n & = h \, \vec{v}\npeh \label{HP_EL1_discrete_2} ,                                                                                    \\
            \vec{p}\npe - \vec{p}\n & = h \, \DG_1 T(\vec{q},\vec{v}) - h \, \DG V(\vec{q}) - h\, \sum_{k=1}^{m} \lambda_k \DG g_k(\vec{q})  \label{HP_EL2_discrete_2} , \\
            \vec{p}\npeh            & = \DG_2 T(\vec{q},\vec{v}) \label{HP_EL3_discrete_2} ,                                                                             \\
            \vec{g}(\vec{q}\npe)    & = \vec{0} .
        \end{align}
    \end{subequations}
\end{remark}

\begin{definition}
    A discrete version of the energy function \eqref{generalized_energy} is given by
    \begin{align} \label{discrete_energy_function}
        E\n & = \vec{p}\n \cdot \vec{v}\n - \frac{1}{2}\vec{v}\n \cdot \vec{M}(\vec{q}\n) \vec{v}\n + V(\vec{q}\n) .
    \end{align}
\end{definition}

\begin{proposition}
    The generalized energy \eqref{discrete_energy_function} is conserved by time-stepping method \eqref{HP_EL_discrete} in a discrete sense.
\end{proposition}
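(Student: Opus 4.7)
The plan is to mirror the continuous-time proof as closely as possible, using the partitioned directionality condition of the Gonzalez discrete derivative to play the role that the chain rule played in the continuous argument. I would start by directly computing the increment
\begin{equation*}
    E\npe - E\n = \vec{p}\npe\cdot\vec{v}\npe - \vec{p}\n\cdot\vec{v}\n - \bigl[L(\vec{q}\npe,\vec{v}\npe) - L(\vec{q}\n,\vec{v}\n)\bigr]
\end{equation*}
from the definition \eqref{discrete_energy_function} (recognising that $E^n = \vec{p}^n\cdot\vec{v}^n - L(\vec{q}^n,\vec{v}^n)$).

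The first step is to rewrite the $\vec{p}\cdot\vec{v}$ difference via the elementary midpoint identity
\begin{equation*}
    \vec{p}\npe\cdot\vec{v}\npe - \vec{p}\n\cdot\vec{v}\n = (\vec{p}\npe-\vec{p}\n)\cdot\vec{v}\npeh + \vec{p}\npeh\cdot(\vec{v}\npe-\vec{v}\n) \eqcomma
\end{equation*}
which is a direct expansion using $\vec{v}\npeh=\tfrac12(\vec{v}\n+\vec{v}\npe)$ and the analogous formula for $\vec{p}\npeh$. In parallel, the Lagrangian difference is handled by the partitioned directionality condition announced after \eqref{HP_EL_discrete}, namely $L(\vec{q}\npe,\vec{v}\npe) - L(\vec{q}\n,\vec{v}\n) = \DG_1 L\cdot\Delta\vec{q} + \DG_2 L\cdot\Delta\vec{v}$.

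Now I would plug the discrete equations \eqref{HP_EL_discrete} into the resulting expression. Using \eqref{HP_EL2_discrete} for $\vec{p}\npe-\vec{p}\n$ and \eqref{HP_EL1_discrete} for $\Delta\vec{q}=h\vec{v}\npeh$, the term $(\vec{p}\npe-\vec{p}\n)\cdot\vec{v}\npeh$ exactly matches $\DG_1 L\cdot\Delta\vec{q}$ up to the Lagrange-multiplier contribution, so these cancel. Similarly, \eqref{HP_EL3_discrete} gives $\vec{p}\npeh = \DG_2 L$, so $\vec{p}\npeh\cdot\Delta\vec{v}$ cancels $\DG_2 L\cdot\Delta\vec{v}$. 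The structure is completely parallel to the continuous proof, with the Gonzalez discrete derivative replacing the ordinary gradient at every step.

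What remains is the constraint contribution $-h\sum_k \lambda_k \DG g_k(\vec{q})\cdot\vec{v}\npeh$, and this is where the key use of directionality occurs. Applying \eqref{directionality} to each $g_k$ together with $h\vec{v}\npeh=\vec{q}\npe-\vec{q}\n$ yields $h\,\DG g_k(\vec{q}\n,\vec{q}\npe)\cdot\vec{v}\npeh = g_k(\vec{q}\npe) - g_k(\vec{q}\n)$. By \eqref{HP_EL4_discrete} the endpoint value vanishes, and by the inductive assumption that the scheme starts from an admissible configuration (so that $\vec{g}(\vec{q}\n)=\vec{0}$ was enforced at the previous step), the initial value vanishes too. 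Hence $E\npe - E\n = 0$. The main obstacle I anticipate is conceptual rather than computational: one must recognise that the exact conservation hinges on (i) the particular midpoint-like splitting of the bilinear term $\vec{p}\cdot\vec{v}$, and (ii) the directionality of $\DG$ being invoked both for $L$ (to cancel the driving terms) and for each $g_k$ (to turn the multiplier term into a telescoping constraint residual that is killed by \eqref{HP_EL4_discrete}).
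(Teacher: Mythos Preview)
Your proposal is correct and follows essentially the same route as the paper's proof: both rely on the midpoint identity $(\vec{p}\npe-\vec{p}\n)\cdot\vec{v}\npeh + \vec{p}\npeh\cdot(\vec{v}\npe-\vec{v}\n) = \vec{p}\npe\cdot\vec{v}\npe - \vec{p}\n\cdot\vec{v}\n$, the partitioned directionality condition for $L$, and the directionality of $\DG g_k$ together with \eqref{HP_EL4_discrete} to kill the multiplier term. The only cosmetic difference is that the paper starts by dotting \eqref{HP_EL2_discrete} with $\Delta\vec{q}$ and \eqref{HP_EL3_discrete} with $\Delta\vec{v}$ and then assembles the energy difference, whereas you begin from $E\npe-E\n$ and decompose; the algebraic content is identical.
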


\begin{proof}
    Scalar multiplying \eqref{HP_EL2_discrete} with $\Delta \vec{q}$ and adding the scalar product of \eqref{HP_EL3_discrete} with $\Delta \vec{v}$ yields
    \begin{align}
        (\vec{p}\npe -\vec{p}\n) \cdot (\vec{q}\npe - \vec{q}\n) + h \, \vec{p}\npeh \cdot (\vec{v}\npe - \vec{v}\n)                                                           & =                                                                                                                  \\
        h \, \DG_1 L(\vec{q},\vec{v}) \cdot                                                                           (\vec{q}\npe - \vec{q}\n) - h\, \sum_{k=1}^{m} \lambda_k & \DG g_k(\vec{q}) \cdot (\vec{q}\npe - \vec{q}\n) + \DG_2 L(\vec{q},\vec{v}) \cdot (\vec{v}\npe - \vec{v}\n) \notag
    \end{align}
    Inserting \eqref{HP_EL1_discrete} and making use of the directionality property gives
    \begin{align}
        (\vec{p}\npe -\vec{p}\n) \cdot \vec{v}\npeh + \vec{p}\npeh \cdot (\vec{v}\npe - \vec{v}\n) & = L(\vec{q}\npe,\vec{v}\npe) - L(\vec{q}\n,\vec{v}\n)                                          \\
                                                                                                   & \qquad \qquad - \sum_{k=1}^{m} \lambda_k \left(g_k(\vec{q}\npe)-g_k(\vec{q}\n)\right) . \notag
    \end{align}
    After cancelling some terms on the left-hand side of the previous relation and taking into account that the constraints are identically satisfied in each time step (see \eqref{HP_EL4_discrete}), one eventually arrives at
    \begin{align}
        E(\vec{q}\npe,\vec{v}\npe, \vec{p}\npe) = E(\vec{q}\n,\vec{v}\n, \vec{p}\n)
    \end{align}
    which concludes the proof.
\end{proof}


\begin{proposition}
    The discrete fiber derivative \eqref{HP_EL3_discrete_2} in general does not imply a pointwise fulfillment of the continuous fiber derivative. That is, $\vec{p}\n = \vec{M}(\vec{q}\n)\vec{ v}\n$ does not hold in general. However, the relation $\vec{p}\n = \vec{M}\vec{ v}\n$ is satisfied for constant mass matrices.
\end{proposition}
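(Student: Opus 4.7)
The plan is to separate the proposition into its positive part (constant $\vec{M}$) and its negative part (general $\vec{M}(\vec{q})$) and treat them by matching computations. The positive direction reduces to a clean algebraic identity for quadratic forms together with an induction on the step index; the negative direction is a non-collapse statement about the partitioned Gonzalez gradient, best supported by a small explicit counterexample.

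For the constant-mass half, I would start by noting that $T(\vec{q},\vec{v}) = \tfrac{1}{2}\vec{v}\cdot\vec{M}\vec{v}$ no longer depends on $\vec{q}$, so the midpoint derivatives reduce to $\D_1 T(\vec{z}) = \vec{0}$ and $\D_2 T(\vec{z}) = \vec{M}\vec{v}\npeh$. The symmetry of $\vec{M}$ gives the quadratic identity
\begin{align*}
T(\vec{v}\npe) - T(\vec{v}\n) = \vec{M}\vec{v}\npeh \cdot (\vec{v}\npe - \vec{v}\n) \eqcomma
\end{align*}
which shows that the residual driving the correction term in \eqref{eq:DD-Gonzalez} vanishes. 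Consequently the partitioned discrete gradient collapses to $\DG_1 T = \vec{0}$ and $\DG_2 T = \vec{M}\vec{v}\npeh$, and \eqref{HP_EL3_discrete_2} becomes $\vec{p}\npeh = \vec{M}\vec{v}\npeh$, i.e. $\vec{p}\n + \vec{p}\npe = \vec{M}(\vec{v}\n + \vec{v}\npe)$. Assuming a consistent initialization $\vec{p}^0 = \vec{M}\vec{v}^0$, a one-line induction on $n$ then transfers the pointwise fiber derivative from step $n$ to step $n+1$, closing the positive half.

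For the general half, I would argue that the key algebraic identity fails: when $\vec{M}$ depends on $\vec{q}$, the difference $T(\vec{q}\npe,\vec{v}\npe) - T(\vec{q}\n,\vec{v}\n)$ mixes contributions from $\Delta\vec{q}$ and $\Delta\vec{v}$ through the variation of $\vec{M}(\vec{q})$, so the residual in \eqref{eq:DD-Gonzalez} is generically nonzero. The partitioned gradient $\DG_2 T$ must therefore carry a correction proportional to $\Delta\vec{v}/\|\Delta\vec{v}\|^2$, and it cannot be written as $\vec{M}(\vec{q}^*)\vec{v}^*$ at any single evaluation point. Hence $\vec{p}\npeh = \DG_2 T$ no longer reduces to an affine two-term relation between $(\vec{p}\n+\vec{p}\npe)$ and $(\vec{v}\n+\vec{v}\npe)$, and the induction used in the constant-mass case breaks down. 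To make this rigorous I would exhibit a one-dimensional counterexample, for instance $m(q) = 1 + q^2$ with consistently initialized $p^0 = m(q^0) v^0$, and verify by direct symbolic calculation of one step of \eqref{HP_EL_discrete_2} that $p^1 \neq m(q^1)v^1$.

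The main obstacle is precisely this second part: the non-collapse observation is structural and needs a concrete witness to exclude the degenerate sub-cases in which the Gonzalez correction happens to vanish (e.g.\ orthogonality between $\Delta\vec{v}$ and the residual, or symmetries that accidentally cancel the $\vec{q}$-dependence). A one-dimensional scalar example sidesteps these accidental cancellations since all quantities are scalars, so nonvanishing of the correction is generic and easily arranged by choice of initial data.
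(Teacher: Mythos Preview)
Your treatment of the constant-mass case is essentially identical to the paper's proof: compute $\DG_2 T = \vec{M}\vec{v}\npeh$, read off $\vec{p}\npeh = \vec{M}\vec{v}\npeh$, and propagate $\vec{p}\n = \vec{M}\vec{v}\n$ to $\vec{p}\npe = \vec{M}\vec{v}\npe$ by induction. You supply more detail than the paper does---you explicitly verify that the quadratic identity $T(\vec{v}\npe) - T(\vec{v}\n) = \vec{M}\vec{v}\npeh\cdot\Delta\vec{v}$ kills the Gonzalez correction term, whereas the paper simply asserts $\DG_2 T = \vec{M}\vec{v}\npeh$ without derivation---but the skeleton is the same.

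For the negative part your approach genuinely differs: the paper gives no argument at all for the claim that $\vec{p}\n = \vec{M}(\vec{q}\n)\vec{v}\n$ fails in general, treating it as self-evident from the form of the discrete gradient. Your plan to exhibit a one-dimensional witness such as $m(q) = 1 + q^2$ and check one step symbolically is therefore stronger than what the paper offers, and it is the right way to make the negative claim rigorous. The paper effectively relies on the later numerical example (the spherical spring pendulum, where $E\n \neq E_\mathrm{tot}\n$ is observed) as implicit evidence; your explicit scalar counterexample would make the proposition self-contained.
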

\begin{proof}
    In the case of a constant mass matrix, the discrete derivative is given by $\DG_2 T (\vec{q},\vec{v}) = \vec{M} \vec{v}\npeh$. Accordingly, the discrete fiber derivative \eqref{HP_EL3_discrete_2} yields $\vec{p}\npeh = \vec{M} \vec{v}\npeh$. Consequently, provided that $\vec{p}\n = \vec{M} \vec{v}\n$ is satisfied, $\vec{p}\npe = \vec{M} \vec{v}\npe$ is fulfilled, too.
\end{proof}


\begin{remark}
    Using the proposed integrator \eqref{HP_EL_discrete_2}, the discrete total energy
    \begin{align}
        E_\mathrm{tot}\n = \frac{1}{2}\vec{v}\n \cdot \vec{M}(\vec{q}\n) \vec{v}\n + V(\vec{q}\n) \label{discrete_total_energy}
    \end{align}
    is in general not identical to the discrete generalized energy function \eqref{discrete_energy_function}. This is only the case, if $\vec{p}\n = \vec{M}(\vec{q}\n)\vec{ v}\n$.
\end{remark}



\section{NUMERICAL EXAMPLE}\label{sec_examples}
In the following, the newly proposed scheme \eqref{HP_EL_discrete_2} is applied to some numerical examples. Since this involves the solution of an implicit set of equations, Newton's method is used in every time step with a tolerance of $\epsilon_\mathrm{Newton}$. The computations have been performed using the code package available at \cite{kinon_metis_2023}, which can also be used for verification.

\subsection{Mass-spring system with redundant coordinates}

\begin{figure}[b!]
    \centering
    \def\svgwidth{0.6\textwidth}
    \import{./figs/}{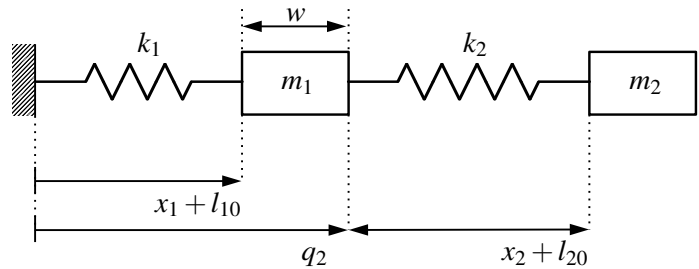}
    \caption{Spring-mass multibody system.}
    \label{fig:federmasse}

\end{figure}

We take up an example with singular mass matrix from \cite{udwadia06} (cf. Sec.~5, Example 3) as depicted in Fig.~\ref{fig:federmasse} with masses $m_1$ and $m_2$ and springs with constants $k_1$ and $k_2$ and resting lengths $l_{10}$ and $l_{20}$, respectively. By splitting up the system into two seperate subsystems it can be considered a multibody system with two degrees of freedom. However, we use two absolute coordinates ($x_1$ and $q_2$) and one relative coordinate $x_2$ to define the systems kinematics, i.e. $d=3$, and
\begin{align}
    \vec{q}= \begin{bmatrix}
                 x_1 \\ q_2 \\ x_2
             \end{bmatrix}
\end{align}
with corresponding velocities
\begin{align}
    \vec{v} = \begin{bmatrix}
                  v_1 \\ v_2 \\ v_3
              \end{bmatrix} = \begin{bmatrix}
                                  \dot{x}_1 \\ \dot{q}_2 \\ \dot{x}_2
                              \end{bmatrix}.
\end{align}
Consequently, the interconnection constraint $ q_2 = x_1 + l_{10} + w $ arises, which can be recast as
\begin{align}
    g(\vec{q}) = \frac{1}{2} \left( (q_2 -x_1)^2 - (l_{10}+w)^2 \right) = 0 , \label{mass_spring_constraint}
\end{align}
such that $m=1$. The total kinetic energy is given by
\begin{align}
    T = \frac{1}{2} m_1 v_1^2 + \frac{1}{2} m_2 (v_2 + v_3)^2
\end{align}
and thus the mass matrix in \eqref{lagrangefunction} can be identified as
\begin{align}
    \vec{M} = \begin{bmatrix}
                  m_1 &  & 0   &  & 0   \\
                  0   &  & m_2 &  & m_2 \\
                  0   &  & m_2 &  & m_2
              \end{bmatrix}
\end{align}
and is singular. Thus, the inverse $\vec{M}\inv$ does not exist and we cannot find a Hamiltonian for this problem.
The potential is given by the elastic potential of the two springs, which is assumed to be nonlinear, such that
\begin{align}
    V(\vec{q}) = \frac{1}{4} k_1 (x_1^2 + x_1^4) +  \frac{1}{4} k_2 (x_2^2 + x_2^4) .
\end{align}
Since $\vec{M}$ is constant and does not dependent of the configuration,  the discrete generalized energy \eqref{discrete_energy_function} and the discrete total energy of the system \eqref{discrete_total_energy} are equivalent. Simulation parameters have been chosen as shown in Table~\ref{tab:tab1}.


\begin{table}[tb]
    \begin{center}
        \caption{Simulation parameters for mass-spring system with $i \in \{1,2\}$.}
        \vspace{1mm}
        \begin{tabular}{*{7}{|c}|}
            \hline
            $h$   & $T$  & $m_i$      & $k_i$        & $l_{i0}$     & $w$   & $\epsilon_\mathrm{Newton}$ \\
            \hline
            $0.1$ & $10$ & $\{1, 1\}$ & $\{ 1, 3 \}$ & $\{ 1, 1 \}$ & $0.1$ & $10^{-9}$                  \\
            \hline
        \end{tabular}
        \label{tab:tab1}
    \end{center}
\end{table}

The simulation yielded the energy evolutions displayed in Fig.~\ref{fig:massspring_energy}, where energy transfer between kinetic and potential energy becomes visible. The energy-consistent approximation provided by the proposed method becomes obvious in Fig.~\ref{fig:massspring_energy_diff}, where the temporal increment of the generalized energy is close to computer precision, such that it is a discretely conserved quantity. Furthermore, Fig.~\ref{fig:massspring_constraint} underlines the computationally exact treatment of the constraint equation in each time step.

\begin{figure}[b!]
    \centering
    \setlength{\figH}{0.18\textheight}
    \setlength{\figW}{0.6\textwidth}
    \begin{tikzpicture}

  \begin{axis}[%
      width=0.951\figW,
      height=\figH,
      at={(0\figW,0\figH)},
      scale only axis,
      xmin=0,
      xmax=10,
      xlabel style={font=\color{white!15!black}},
      xlabel={$t$},
      ymin=-0.1,
      ymax=1.1,
      ylabel style={font=\color{white!15!black}},
      axis background/.style={fill=white},
      legend style={at={(0.98,0.03)}, anchor=south east,legend cell align=left, align=left, draw=white!15!black}
    ]
    \addplot [color=black, line width=1pt, dotted]
    table[row sep=crcr]{%
        0	1\\
        0.1	0.9802\\
        0.2	0.9222\\
        0.3	0.8316\\
        0.4	0.7213\\
        0.5	0.6129\\
        0.6	0.5312\\
        0.7	0.4941\\
        0.8	0.5031\\
        0.9	0.5441\\
        1	0.5959\\
        1.1	0.6391\\
        1.2	0.6602\\
        1.3	0.6488\\
        1.4	0.5955\\
        1.5	0.4932\\
        1.6	0.3464\\
        1.7	0.1834\\
        1.8	0.05514\\
        1.9	0.01207\\
        2	0.06885\\
        2.1	0.193\\
        2.2	0.3293\\
        2.3	0.4345\\
        2.4	0.492\\
        2.5	0.5056\\
        2.6	0.4877\\
        2.7	0.452\\
        2.8	0.413\\
        2.9	0.3868\\
        3	0.3897\\
        3.1	0.4331\\
        3.2	0.5169\\
        3.3	0.6284\\
        3.4	0.7472\\
        3.5	0.853\\
        3.6	0.9319\\
        3.7	0.977\\
        3.8	0.9862\\
        3.9	0.9611\\
        4	0.9061\\
        4.1	0.8303\\
        4.2	0.7487\\
        4.3	0.68\\
        4.4	0.6401\\
        4.5	0.6351\\
        4.6	0.6584\\
        4.7	0.6949\\
        4.8	0.7284\\
        4.9	0.7461\\
        5	0.7395\\
        5.1	0.7025\\
        5.2	0.6295\\
        5.3	0.5187\\
        5.4	0.3779\\
        5.5	0.2314\\
        5.6	0.1167\\
        5.7	0.06672\\
        5.8	0.089\\
        5.9	0.16\\
        6	0.2418\\
        6.1	0.3042\\
        6.2	0.3355\\
        6.3	0.3397\\
        6.4	0.3288\\
        6.5	0.3161\\
        6.6	0.3141\\
        6.7	0.3325\\
        6.8	0.3777\\
        6.9	0.4504\\
        7	0.5449\\
        7.1	0.6505\\
        7.2	0.754\\
        7.3	0.844\\
        7.4	0.9124\\
        7.5	0.9553\\
        7.6	0.9724\\
        7.7	0.9659\\
        7.8	0.94\\
        7.9	0.9012\\
        8	0.8577\\
        8.1	0.8183\\
        8.2	0.7895\\
        8.3	0.7741\\
        8.4	0.7697\\
        8.5	0.7701\\
        8.6	0.7679\\
        8.7	0.7565\\
        8.8	0.7305\\
        8.9	0.6862\\
        9	0.6204\\
        9.1	0.5317\\
        9.2	0.4225\\
        9.3	0.3024\\
        9.4	0.1887\\
        9.5	0.1028\\
        9.6	0.05988\\
        9.7	0.06139\\
        9.8	0.09435\\
        9.9	0.14\\
        10	0.1836\\
      };
    \addlegendentry{$T$}

    \addplot [color=black, line width=1pt, densely dashed]
    table[row sep=crcr]{%
        0	0\\
        0.1	0.01977\\
        0.2	0.07776\\
        0.3	0.1684\\
        0.4	0.2787\\
        0.5	0.3871\\
        0.6	0.4688\\
        0.7	0.5059\\
        0.8	0.4969\\
        0.9	0.4559\\
        1	0.4041\\
        1.1	0.3609\\
        1.2	0.3398\\
        1.3	0.3512\\
        1.4	0.4045\\
        1.5	0.5068\\
        1.6	0.6536\\
        1.7	0.8166\\
        1.8	0.9449\\
        1.9	0.9879\\
        2	0.9311\\
        2.1	0.807\\
        2.2	0.6707\\
        2.3	0.5655\\
        2.4	0.508\\
        2.5	0.4944\\
        2.6	0.5123\\
        2.7	0.548\\
        2.8	0.587\\
        2.9	0.6132\\
        3	0.6103\\
        3.1	0.5669\\
        3.2	0.4831\\
        3.3	0.3716\\
        3.4	0.2528\\
        3.5	0.147\\
        3.6	0.06807\\
        3.7	0.02305\\
        3.8	0.01376\\
        3.9	0.03886\\
        4	0.09389\\
        4.1	0.1697\\
        4.2	0.2513\\
        4.3	0.32\\
        4.4	0.3599\\
        4.5	0.3649\\
        4.6	0.3416\\
        4.7	0.3051\\
        4.8	0.2716\\
        4.9	0.2539\\
        5	0.2605\\
        5.1	0.2975\\
        5.2	0.3705\\
        5.3	0.4813\\
        5.4	0.6221\\
        5.5	0.7686\\
        5.6	0.8833\\
        5.7	0.9333\\
        5.8	0.911\\
        5.9	0.84\\
        6	0.7582\\
        6.1	0.6958\\
        6.2	0.6645\\
        6.3	0.6603\\
        6.4	0.6712\\
        6.5	0.6839\\
        6.6	0.6859\\
        6.7	0.6675\\
        6.8	0.6223\\
        6.9	0.5496\\
        7	0.4551\\
        7.1	0.3495\\
        7.2	0.246\\
        7.3	0.156\\
        7.4	0.08761\\
        7.5	0.04466\\
        7.6	0.02755\\
        7.7	0.03412\\
        7.8	0.06002\\
        7.9	0.09881\\
        8	0.1423\\
        8.1	0.1817\\
        8.2	0.2105\\
        8.3	0.2259\\
        8.4	0.2303\\
        8.5	0.2299\\
        8.6	0.2321\\
        8.7	0.2435\\
        8.8	0.2695\\
        8.9	0.3138\\
        9	0.3796\\
        9.1	0.4683\\
        9.2	0.5775\\
        9.3	0.6976\\
        9.4	0.8113\\
        9.5	0.8972\\
        9.6	0.9401\\
        9.7	0.9386\\
        9.8	0.9056\\
        9.9	0.86\\
        10	0.8164\\
      };
    \addlegendentry{$V$}

    \addplot [color=black, line width=1pt]
    table[row sep=crcr]{%
        0	1\\
        0.1	1\\
        0.2	1\\
        0.3	1\\
        0.4	1\\
        0.5	1\\
        0.6	1\\
        0.7	1\\
        0.8	1\\
        0.9	1\\
        1	1\\
        1.1	1\\
        1.2	1\\
        1.3	1\\
        1.4	1\\
        1.5	1\\
        1.6	1\\
        1.7	1\\
        1.8	1\\
        1.9	1\\
        2	1\\
        2.1	1\\
        2.2	1\\
        2.3	1\\
        2.4	1\\
        2.5	1\\
        2.6	1\\
        2.7	1\\
        2.8	1\\
        2.9	1\\
        3	1\\
        3.1	1\\
        3.2	1\\
        3.3	1\\
        3.4	1\\
        3.5	1\\
        3.6	1\\
        3.7	1\\
        3.8	1\\
        3.9	1\\
        4	1\\
        4.1	1\\
        4.2	1\\
        4.3	1\\
        4.4	1\\
        4.5	1\\
        4.6	1\\
        4.7	1\\
        4.8	1\\
        4.9	1\\
        5	1\\
        5.1	1\\
        5.2	1\\
        5.3	1\\
        5.4	1\\
        5.5	1\\
        5.6	1\\
        5.7	1\\
        5.8	1\\
        5.9	1\\
        6	1\\
        6.1	1\\
        6.2	1\\
        6.3	1\\
        6.4	1\\
        6.5	1\\
        6.6	1\\
        6.7	1\\
        6.8	1\\
        6.9	1\\
        7	1\\
        7.1	1\\
        7.2	1\\
        7.3	1\\
        7.4	1\\
        7.5	1\\
        7.6	1\\
        7.7	1\\
        7.8	1\\
        7.9	1\\
        8	1\\
        8.1	1\\
        8.2	1\\
        8.3	1\\
        8.4	1\\
        8.5	1\\
        8.6	1\\
        8.7	1\\
        8.8	1\\
        8.9	1\\
        9	1\\
        9.1	1\\
        9.2	1\\
        9.3	1\\
        9.4	1\\
        9.5	1\\
        9.6	1\\
        9.7	1\\
        9.8	1\\
        9.9	1\\
        10	1\\
      };
    \addlegendentry{$E$}

  \end{axis}
\end{tikzpicture}%
    \caption{Energy quantities}
    \label{fig:massspring_energy}
\end{figure}
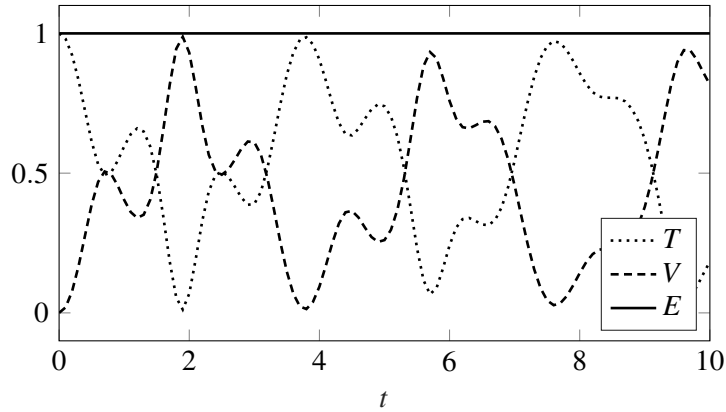

\begin{figure}[b!]
    \centering
    \hspace*{-10mm}
    \setlength{\figH}{0.18\textheight}
    \setlength{\figW}{0.6\textwidth}
    \begin{tikzpicture}

  \begin{axis}[%
    width=0.951\figW,
    height=\figH,
    at={(0\figW,0\figH)},
    scale only axis,
    xmin=0,
    xmax=10,
    xlabel style={font=\color{white!15!black}},
    xlabel={$t$},
    ymin=-2e-15,
    ymax=2e-15,
    ylabel style={font=\color{white!15!black}},
    ylabel={$E^{n+1}-E^{n}$},
    axis background/.style={fill=white}
    ]
    \addplot [color=black, line width=1pt, forget plot]
    table[row sep=crcr]{%
        0	-1.11e-16\\
        0.1	0\\
        0.2	1.11e-16\\
        0.3	-2.22e-16\\
        0.4	2.22e-16\\
        0.5	-1.11e-16\\
        0.6	1.11e-16\\
        0.7	-1.11e-16\\
        0.8	-3.331e-16\\
        0.9	4.441e-16\\
        1	-2.22e-16\\
        1.1	1.11e-16\\
        1.2	1.11e-16\\
        1.3	-5.551e-16\\
        1.4	4.441e-16\\
        1.5	0\\
        1.6	-3.331e-16\\
        1.7	3.331e-16\\
        1.8	3.331e-16\\
        1.9	0\\
        2	-2.22e-16\\
        2.1	2.22e-16\\
        2.2	-4.441e-16\\
        2.3	2.22e-16\\
        2.4	0\\
        2.5	0\\
        2.6	-2.22e-16\\
        2.7	2.22e-16\\
        2.8	0\\
        2.9	0\\
        3	-1.11e-16\\
        3.1	1.11e-16\\
        3.2	0\\
        3.3	0\\
        3.4	2.22e-16\\
        3.5	0\\
        3.6	0\\
        3.7	0\\
        3.8	0\\
        3.9	-2.22e-16\\
        4	0\\
        4.1	4.441e-16\\
        4.2	-2.22e-16\\
        4.3	0\\
        4.4	2.22e-16\\
        4.5	0\\
        4.6	2.22e-16\\
        4.7	0\\
        4.8	2.22e-16\\
        4.9	2.22e-16\\
        5	-2.22e-16\\
        5.1	0\\
        5.2	0\\
        5.3	0\\
        5.4	0\\
        5.5	2.22e-16\\
        5.6	-4.441e-16\\
        5.7	4.441e-16\\
        5.8	-2.22e-16\\
        5.9	0\\
        6	0\\
        6.1	-2.22e-16\\
        6.2	-2.22e-16\\
        6.3	4.441e-16\\
        6.4	-2.22e-16\\
        6.5	2.22e-16\\
        6.6	-2.22e-16\\
        6.7	2.22e-16\\
        6.8	-4.441e-16\\
        6.9	0\\
        7	0\\
        7.1	4.441e-16\\
        7.2	0\\
        7.3	-2.22e-16\\
        7.4	2.22e-16\\
        7.5	-4.441e-16\\
        7.6	2.22e-16\\
        7.7	-4.441e-16\\
        7.8	2.22e-16\\
        7.9	0\\
        8	-2.22e-16\\
        8.1	2.22e-16\\
        8.2	-2.22e-16\\
        8.3	4.441e-16\\
        8.4	0\\
        8.5	-2.22e-16\\
        8.6	2.22e-16\\
        8.7	2.22e-16\\
        8.8	-4.441e-16\\
        8.9	0\\
        9	0\\
        9.1	2.22e-16\\
        9.2	-2.22e-16\\
        9.3	0\\
        9.4	0\\
        9.5	2.22e-16\\
        9.6	-2.22e-16\\
        9.7	4.441e-16\\
        9.8	0\\
        9.9	-2.22e-16\\
      };
    \addplot [color=black, dashed, line width=0.5pt, forget plot]
    table[row sep=crcr]{%
        0	1e-15\\
        10	1e-15\\
      };
    \addplot [color=black, dashed, line width=0.5pt, forget plot]
    table[row sep=crcr]{%
        0	-1e-15\\
        10	-1e-15\\
      };
  \end{axis}
\end{tikzpicture}%
    \caption{Energy increments}
    \label{fig:massspring_energy_diff}
\end{figure}
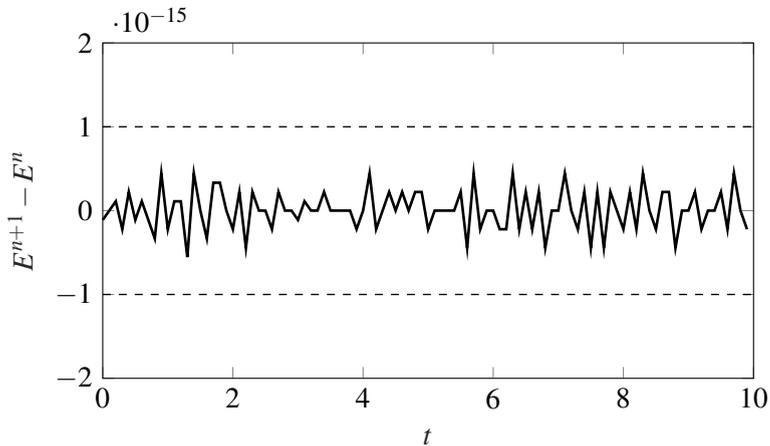

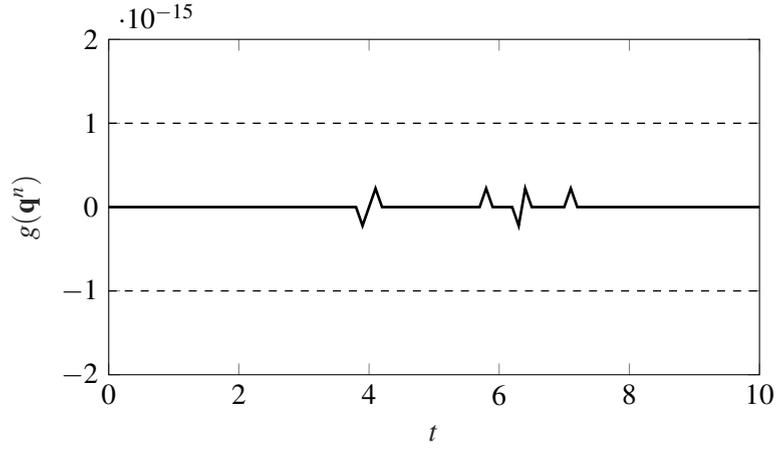
\begin{figure}[t!]
    \centering
    \setlength{\figH}{0.18\textheight}
    \setlength{\figW}{0.6\textwidth}
    \begin{tikzpicture}

  \begin{axis}[%
      width=0.951\figW,
      height=\figH,
      at={(0\figW,0\figH)},
      scale only axis,
      xmin=0,
      xmax=10,
      xlabel style={font=\color{white!15!black}},
      xlabel={$t$},
      ymin=-2e-15,
      ymax=2e-15,
      ylabel style={font=\color{white!15!black}},
      ylabel={$g(\vec{q}\n)$},
      axis background/.style={fill=white}
    ]
    \addplot [color=black, line width=1pt, forget plot]
    table[row sep=crcr]{%
        0	0\\
        0.1	0\\
        0.2	0\\
        0.3	0\\
        0.4	0\\
        0.5	0\\
        0.6	0\\
        0.7	0\\
        0.8	0\\
        0.9	0\\
        1	0\\
        1.1	0\\
        1.2	0\\
        1.3	0\\
        1.4	0\\
        1.5	0\\
        1.6	0\\
        1.7	0\\
        1.8	0\\
        1.9	0\\
        2	0\\
        2.1	0\\
        2.2	0\\
        2.3	0\\
        2.4	0\\
        2.5	0\\
        2.6	0\\
        2.7	0\\
        2.8	0\\
        2.9	0\\
        3	0\\
        3.1	0\\
        3.2	0\\
        3.3	0\\
        3.4	0\\
        3.5	0\\
        3.6	0\\
        3.7	0\\
        3.8	0\\
        3.9	-2.22e-16\\
        4	0\\
        4.1	2.22e-16\\
        4.2	0\\
        4.3	0\\
        4.4	0\\
        4.5	0\\
        4.6	0\\
        4.7	0\\
        4.8	0\\
        4.9	0\\
        5	0\\
        5.1	0\\
        5.2	0\\
        5.3	0\\
        5.4	0\\
        5.5	0\\
        5.6	0\\
        5.7	0\\
        5.8	2.22e-16\\
        5.9	0\\
        6	0\\
        6.1	0\\
        6.2	0\\
        6.3	-2.22e-16\\
        6.4	2.22e-16\\
        6.5	0\\
        6.6	0\\
        6.7	0\\
        6.8	0\\
        6.9	0\\
        7	0\\
        7.1	2.22e-16\\
        7.2	0\\
        7.3	0\\
        7.4	0\\
        7.5	0\\
        7.6	0\\
        7.7	0\\
        7.8	0\\
        7.9	0\\
        8	0\\
        8.1	0\\
        8.2	0\\
        8.3	0\\
        8.4	0\\
        8.5	0\\
        8.6	0\\
        8.7	0\\
        8.8	0\\
        8.9	0\\
        9	0\\
        9.1	0\\
        9.2	0\\
        9.3	0\\
        9.4	0\\
        9.5	0\\
        9.6	0\\
        9.7	0\\
        9.8	0\\
        9.9	0\\
        10	0\\
      };
    \addplot [color=black, dashed, line width=0.5pt, forget plot]
    table[row sep=crcr]{%
        0	1e-15\\
        10	1e-15\\
      };
    \addplot [color=black, dashed, line width=0.5pt, forget plot]
    table[row sep=crcr]{%
        0	-1e-15\\
        10	-1e-15\\
      };
  \end{axis}
\end{tikzpicture}%
    \caption{Positional constraint \eqref{mass_spring_constraint}}
    \label{fig:massspring_constraint}
\end{figure}

\subsection{Nonlinear spring pendulum with spherical coordinates}

\begin{figure}[b!]
    \centering
    \def\svgwidth{0.4\textwidth}
\begingroup%
  \makeatletter%
  \providecommand\color[2][]{%
    \errmessage{(Inkscape) Color is used for the text in Inkscape, but the package 'color.sty' is not loaded}%
    \renewcommand\color[2][]{}%
  }%
  \providecommand\transparent[1]{%
    \errmessage{(Inkscape) Transparency is used (non-zero) for the text in Inkscape, but the package 'transparent.sty' is not loaded}%
    \renewcommand\transparent[1]{}%
  }%
  \providecommand\rotatebox[2]{#2}%
  \newcommand*\fsize{\dimexpr\f@size pt\relax}%
  \newcommand*\lineheight[1]{\fontsize{\fsize}{#1\fsize}\selectfont}%
  \ifx\svgwidth\undefined%
    \setlength{\unitlength}{219.49030869bp}%
    \ifx\svgscale\undefined%
      \relax%
    \else%
      \setlength{\unitlength}{\unitlength * \real{\svgscale}}%
    \fi%
  \else%
    \setlength{\unitlength}{\svgwidth}%
  \fi%
  \global\let\svgwidth\undefined%
  \global\let\svgscale\undefined%
  \makeatother%
  \begin{picture}(1,1.23945887)%
    \lineheight{1}%
    \setlength\tabcolsep{0pt}%
    \put(0,0){\includegraphics[width=\unitlength,page=1]{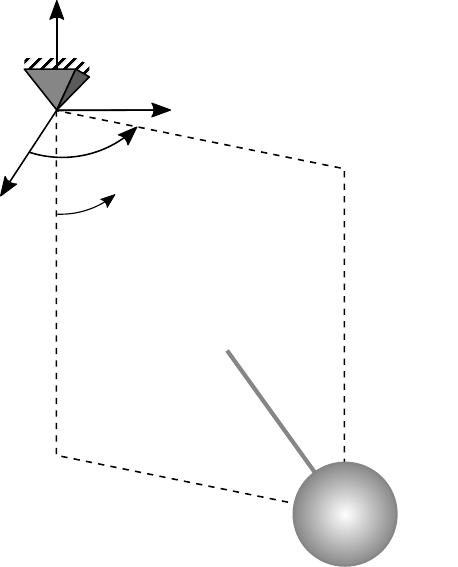}}%
    \put(0.27294583,0.88718501){\color[rgb]{0,0,0}\makebox(0,0)[lt]{\lineheight{1.25}\smash{\begin{tabular}[t]{l}$\varphi$\end{tabular}}}}%
    \put(0.18153497,0.72512735){\color[rgb]{0,0,0}\makebox(0,0)[lt]{\lineheight{1.25}\smash{\begin{tabular}[t]{l}$\theta$\end{tabular}}}}%
    \put(0.85887014,0.19412804){\color[rgb]{0.5254902,0.5254902,0.5254902}\makebox(0,0)[lt]{\lineheight{1.25}\smash{\begin{tabular}[t]{l}$m$\end{tabular}}}}%
    \put(0.57776142,0.38565296){\color[rgb]{0,0,0}\makebox(0,0)[lt]{\lineheight{1.25}\smash{\begin{tabular}[t]{l}$r$\end{tabular}}}}%
    \put(0,0){\includegraphics[width=\unitlength,page=2]{figs/pendulum.pdf}}%
    \put(0.46004728,0.68651785){\color[rgb]{0.5254902,0.5254902,0.5254902}\makebox(0,0)[lt]{\lineheight{1.25}\smash{\begin{tabular}[t]{l}$EA$\end{tabular}}}}%
    \put(0,0){\includegraphics[width=\unitlength,page=3]{figs/pendulum.pdf}}%
  \end{picture}%
\endgroup%

    \caption{Spring pendulum.}
    \label{fig:pendulum}
\end{figure}

We analyze the 3D spring pendulum with nonlinear spring potential formulated in spherical coordinates. The distance of mass $m$ from the Cartesian origin is given by $r \in \mathbb{R}_{\geq 0}$ and two angles $\theta, \varphi$ describe the orientation, as depicted in Fig.~\ref{fig:pendulum}. Consequently, the system has three degrees of freedom and
\begin{align}
    \vec{q} = \begin{bmatrix}
                  r \\ \theta \\ \varphi
              \end{bmatrix} .
\end{align}
The potential energy is given by the nonlinear internal potential with spring constant $EA$ such that
\begin{align}
    V(\vec{q}) = \frac{1}{2} EA \epsilon^2(r) ,
\end{align}
where the strain in the spring is computed via
\begin{align}
    \epsilon(r) = \frac{r^2 - l_0^2}{2 l_0^2}
\end{align}
and $l_0$ denotes the spring's resting length. The mass matrix from \eqref{lagrangefunction} for this example is given by
\begin{align}
    \vec{M}(\vec{q}) = m \, \mathrm{diag}(1, r^2, r^2 \sin^2(\theta))
\end{align}
and becomes singular if $\theta = k \pi$ for all integers $k \in \mathbb{Z}$.
Inital conditions have been chosen as
\begin{align}
    \vec{q}^0 = \begin{bmatrix}
                    1.05 \\ \pi/2 \\
                    0
                \end{bmatrix}, \quad \vec{v}^0 = \begin{bmatrix}
                                                     0 \\ 1 \\
                                                     1
                                                 \end{bmatrix}
\end{align}
to achieve an initial elongation of the spring by $5\%$ and a tangential initial velocity. Simulation parameters have been chosen as shown in Table~\ref{tab:tab2}.

\begin{table}[tb]
    \begin{center}
        \caption{Simulation parameters for spring pendulum.}
        \vspace{1mm}
        \begin{tabular}{*{6}{|c}|}
            \hline
            $h$    & $T$ & $m$ & $EA$  & $l_{0}$ & $\epsilon_\mathrm{Newton}$ \\
            \hline
            $0.01$ & $1$ & $1$ & $300$ & $1$     & $10^{-9}$                  \\
            \hline
        \end{tabular}
        \label{tab:tab2}
    \end{center}
\end{table}

In this example, the energetic quantities have been simulated as can be seen in Fig.~\ref{fig:springPendulum_energy}. While the conservation of the discrete total energy is captured down to an order of $10^{-4}$ (cf. Fig.~\ref{fig:springPendulum_total_energy_diff}), the generalized energy function is exactly preserved by the newly proposed scheme (see Fig.~\ref{fig:springPendulum_energy_function_diff}).


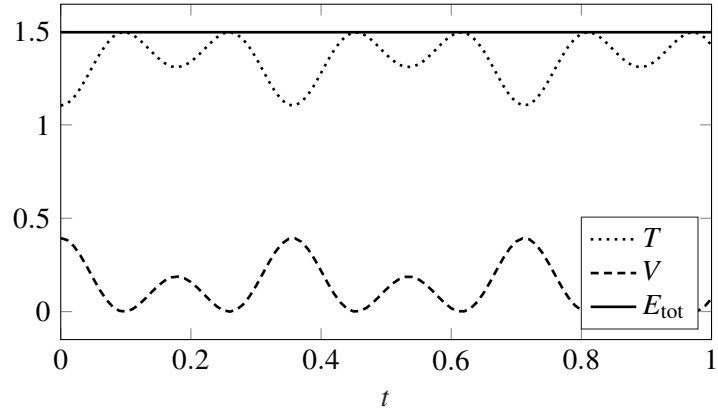
\begin{figure}[htb]
    \centering
    \setlength{\figH}{0.18\textheight}
    \setlength{\figW}{0.6\textwidth}

%
\begin{tikzpicture}

  \begin{axis}[%
      width=0.951\figW,
      height=\figH,
      at={(0\figW,0\figH)},
      scale only axis,
      xmin=0,
      xmax=1,
      xlabel style={font=\color{white!15!black}},
      xlabel={$t$},
      ymin=-0.1496,
      ymax=1.646,
      ylabel style={font=\color{white!15!black}},
      axis background/.style={fill=white},
      legend style={at={(0.98,0.03)}, anchor=south east,legend cell align=left, align=left, draw=white!15!black}
    ]
    \addplot [color=black, line width=1pt, dotted]
    table[row sep=crcr]{%
        0	1.103\\
        0.01	1.114\\
        0.02	1.146\\
        0.03	1.195\\
        0.04	1.255\\
        0.05	1.319\\
        0.06	1.38\\
        0.07	1.432\\
        0.08	1.471\\
        0.09	1.492\\
        0.1	1.496\\
        0.11	1.484\\
        0.12	1.459\\
        0.13	1.426\\
        0.14	1.39\\
        0.15	1.357\\
        0.16	1.33\\
        0.17	1.313\\
        0.18	1.309\\
        0.19	1.318\\
        0.2	1.338\\
        0.21	1.368\\
        0.22	1.403\\
        0.23	1.438\\
        0.24	1.469\\
        0.25	1.49\\
        0.26	1.496\\
        0.27	1.486\\
        0.28	1.459\\
        0.29	1.415\\
        0.3	1.359\\
        0.31	1.296\\
        0.32	1.232\\
        0.33	1.175\\
        0.34	1.132\\
        0.35	1.107\\
        0.36	1.104\\
        0.37	1.123\\
        0.38	1.162\\
        0.39	1.215\\
        0.4	1.278\\
        0.41	1.342\\
        0.42	1.4\\
        0.43	1.448\\
        0.44	1.48\\
        0.45	1.495\\
        0.46	1.493\\
        0.47	1.476\\
        0.48	1.448\\
        0.49	1.413\\
        0.5	1.377\\
        0.51	1.346\\
        0.52	1.322\\
        0.53	1.31\\
        0.54	1.311\\
        0.55	1.324\\
        0.56	1.348\\
        0.57	1.381\\
        0.58	1.416\\
        0.59	1.451\\
        0.6	1.478\\
        0.61	1.494\\
        0.62	1.495\\
        0.63	1.478\\
        0.64	1.444\\
        0.65	1.396\\
        0.66	1.336\\
        0.67	1.272\\
        0.68	1.21\\
        0.69	1.157\\
        0.7	1.121\\
        0.71	1.104\\
        0.72	1.109\\
        0.73	1.136\\
        0.74	1.181\\
        0.75	1.239\\
        0.76	1.303\\
        0.77	1.365\\
        0.78	1.42\\
        0.79	1.462\\
        0.8	1.488\\
        0.81	1.496\\
        0.82	1.488\\
        0.83	1.466\\
        0.84	1.435\\
        0.85	1.399\\
        0.86	1.364\\
        0.87	1.335\\
        0.88	1.316\\
        0.89	1.309\\
        0.9	1.315\\
        0.91	1.333\\
        0.92	1.361\\
        0.93	1.395\\
        0.94	1.431\\
        0.95	1.463\\
        0.96	1.486\\
        0.97	1.496\\
        0.98	1.49\\
        0.99	1.466\\
        1	1.426\\
      };
    \addlegendentry{$T$}

    \addplot [color=black, line width=1pt, densely dashed]
    table[row sep=crcr]{%
        0	0.394\\
        0.01	0.3829\\
        0.02	0.3509\\
        0.03	0.3019\\
        0.04	0.2418\\
        0.05	0.1775\\
        0.06	0.1161\\
        0.07	0.0639\\
        0.08	0.02576\\
        0.09	0.004476\\
        0.1	0.0006299\\
        0.11	0.01271\\
        0.12	0.03744\\
        0.13	0.0703\\
        0.14	0.1062\\
        0.15	0.1398\\
        0.16	0.1668\\
        0.17	0.1834\\
        0.18	0.1876\\
        0.19	0.1788\\
        0.2	0.1582\\
        0.21	0.1284\\
        0.22	0.09334\\
        0.23	0.05795\\
        0.24	0.02742\\
        0.25	0.006738\\
        0.26	5.498e-05\\
        0.27	0.01006\\
        0.28	0.03756\\
        0.29	0.08115\\
        0.3	0.1373\\
        0.31	0.2005\\
        0.32	0.2642\\
        0.33	0.321\\
        0.34	0.3645\\
        0.35	0.3893\\
        0.36	0.3924\\
        0.37	0.3735\\
        0.38	0.3348\\
        0.39	0.2811\\
        0.4	0.2187\\
        0.41	0.1546\\
        0.42	0.09584\\
        0.43	0.04828\\
        0.44	0.01603\\
        0.45	0.001104\\
        0.46	0.003309\\
        0.47	0.02044\\
        0.48	0.04869\\
        0.49	0.08326\\
        0.5	0.1189\\
        0.51	0.1506\\
        0.52	0.1741\\
        0.53	0.1864\\
        0.54	0.1858\\
        0.55	0.1724\\
        0.56	0.148\\
        0.57	0.1158\\
        0.58	0.07999\\
        0.59	0.04578\\
        0.6	0.01834\\
        0.61	0.00244\\
        0.62	0.001741\\
        0.63	0.01828\\
        0.64	0.05206\\
        0.65	0.1009\\
        0.66	0.1604\\
        0.67	0.2246\\
        0.68	0.2865\\
        0.69	0.339\\
        0.7	0.3759\\
        0.71	0.3927\\
        0.72	0.3874\\
        0.73	0.3605\\
        0.74	0.3155\\
        0.75	0.2576\\
        0.76	0.1937\\
        0.77	0.1309\\
        0.78	0.07584\\
        0.79	0.03381\\
        0.8	0.008147\\
        0.81	6.122e-06\\
        0.82	0.008374\\
        0.83	0.03037\\
        0.84	0.06171\\
        0.85	0.09734\\
        0.86	0.132\\
        0.87	0.161\\
        0.88	0.1803\\
        0.89	0.1876\\
        0.9	0.1818\\
        0.91	0.1638\\
        0.92	0.1358\\
        0.93	0.1015\\
        0.94	0.06574\\
        0.95	0.03362\\
        0.96	0.0103\\
        0.97	0.0001704\\
        0.98	0.006305\\
        0.99	0.02995\\
        1	0.07022\\
      };
    \addlegendentry{$V$}

    \addplot [color=black, line width=1pt]
    table[row sep=crcr]{%
        0	1.496\\
        0.01	1.496\\
        0.02	1.496\\
        0.03	1.496\\
        0.04	1.496\\
        0.05	1.496\\
        0.06	1.496\\
        0.07	1.496\\
        0.08	1.496\\
        0.09	1.496\\
        0.1	1.496\\
        0.11	1.496\\
        0.12	1.496\\
        0.13	1.496\\
        0.14	1.496\\
        0.15	1.496\\
        0.16	1.496\\
        0.17	1.496\\
        0.18	1.496\\
        0.19	1.496\\
        0.2	1.496\\
        0.21	1.496\\
        0.22	1.496\\
        0.23	1.496\\
        0.24	1.496\\
        0.25	1.496\\
        0.26	1.496\\
        0.27	1.496\\
        0.28	1.496\\
        0.29	1.496\\
        0.3	1.496\\
        0.31	1.496\\
        0.32	1.496\\
        0.33	1.496\\
        0.34	1.496\\
        0.35	1.496\\
        0.36	1.496\\
        0.37	1.496\\
        0.38	1.496\\
        0.39	1.496\\
        0.4	1.496\\
        0.41	1.496\\
        0.42	1.496\\
        0.43	1.496\\
        0.44	1.496\\
        0.45	1.496\\
        0.46	1.496\\
        0.47	1.496\\
        0.48	1.496\\
        0.49	1.496\\
        0.5	1.496\\
        0.51	1.496\\
        0.52	1.496\\
        0.53	1.496\\
        0.54	1.496\\
        0.55	1.496\\
        0.56	1.496\\
        0.57	1.496\\
        0.58	1.496\\
        0.59	1.496\\
        0.6	1.496\\
        0.61	1.496\\
        0.62	1.496\\
        0.63	1.496\\
        0.64	1.496\\
        0.65	1.496\\
        0.66	1.496\\
        0.67	1.496\\
        0.68	1.496\\
        0.69	1.496\\
        0.7	1.496\\
        0.71	1.496\\
        0.72	1.496\\
        0.73	1.496\\
        0.74	1.496\\
        0.75	1.496\\
        0.76	1.496\\
        0.77	1.496\\
        0.78	1.496\\
        0.79	1.496\\
        0.8	1.496\\
        0.81	1.496\\
        0.82	1.496\\
        0.83	1.496\\
        0.84	1.496\\
        0.85	1.496\\
        0.86	1.496\\
        0.87	1.496\\
        0.88	1.496\\
        0.89	1.496\\
        0.9	1.496\\
        0.91	1.496\\
        0.92	1.496\\
        0.93	1.496\\
        0.94	1.496\\
        0.95	1.496\\
        0.96	1.496\\
        0.97	1.496\\
        0.98	1.496\\
        0.99	1.496\\
        1	1.496\\
      };
    \addlegendentry{$E_\mathrm{tot}$}

  \end{axis}
\end{tikzpicture}%
    \caption{Energy quantities}
    \label{fig:springPendulum_energy}
\end{figure}

\begin{figure}[htb]
    \centering
    \hspace*{-10mm}
    \setlength{\figH}{0.18\textheight}
    \setlength{\figW}{0.6\textwidth}
%
%
\begin{tikzpicture}

  \begin{axis}[%
      width=0.951\figW,
      height=\figH,
      at={(0\figW,0\figH)},
      scale only axis,
      xmin=0,
      xmax=1,
      xlabel style={font=\color{white!15!black}},
      xlabel={$t$},
      ymin=-0.0002,
      ymax=0.0002,
      ylabel style={font=\color{white!15!black}},
      ylabel={$E_\mathrm{tot}^{n+1}-E_\mathrm{tot}^{n}$},
      axis background/.style={fill=white}
    ]
    \addplot [color=black, line width=1pt, forget plot]
    table[row sep=crcr]{%
        0	-2.066e-06\\
        0.01	-1.414e-05\\
        0.02	-1.657e-05\\
        0.03	-2.541e-05\\
        0.04	-2.339e-05\\
        0.05	-2.698e-05\\
        0.06	-1.942e-05\\
        0.07	-1.776e-05\\
        0.08	-5.855e-06\\
        0.09	-1.28e-06\\
        0.1	1.166e-05\\
        0.11	1.518e-05\\
        0.12	2.499e-05\\
        0.13	2.365e-05\\
        0.14	2.748e-05\\
        0.15	1.985e-05\\
        0.16	1.792e-05\\
        0.17	5.931e-06\\
        0.18	1.605e-06\\
        0.19	-1.046e-05\\
        0.2	-1.26e-05\\
        0.21	-2.055e-05\\
        0.22	-1.732e-05\\
        0.23	-1.942e-05\\
        0.24	-1.078e-05\\
        0.25	-8.618e-06\\
        0.26	2.556e-06\\
        0.27	5.301e-06\\
        0.28	1.507e-05\\
        0.29	1.474e-05\\
        0.3	2.016e-05\\
        0.31	1.481e-05\\
        0.32	1.506e-05\\
        0.33	5.012e-06\\
        0.34	1.466e-06\\
        0.35	-1.101e-05\\
        0.36	-1.542e-05\\
        0.37	-2.701e-05\\
        0.38	-2.89e-05\\
        0.39	-3.646e-05\\
        0.4	-3.321e-05\\
        0.41	-3.488e-05\\
        0.42	-2.562e-05\\
        0.43	-2.167e-05\\
        0.44	-7.888e-06\\
        0.45	-9.704e-07\\
        0.46	1.371e-05\\
        0.47	1.934e-05\\
        0.48	3.052e-05\\
        0.49	3.091e-05\\
        0.5	3.562e-05\\
        0.51	2.927e-05\\
        0.52	2.771e-05\\
        0.53	1.654e-05\\
        0.54	1.207e-05\\
        0.55	3.507e-07\\
        0.56	-2.486e-06\\
        0.57	-1.062e-05\\
        0.58	-8.743e-06\\
        0.59	-1.167e-05\\
        0.6	-5.07e-06\\
        0.61	-4.307e-06\\
        0.62	4.294e-06\\
        0.63	5.337e-06\\
        0.64	1.237e-05\\
        0.65	1.046e-05\\
        0.66	1.342e-05\\
        0.67	7.042e-06\\
        0.68	5.518e-06\\
        0.69	-4.632e-06\\
        0.7	-8.962e-06\\
        0.71	-2.041e-05\\
        0.72	-2.451e-05\\
        0.73	-3.393e-05\\
        0.74	-3.452e-05\\
        0.75	-3.895e-05\\
        0.76	-3.368e-05\\
        0.77	-3.166e-05\\
        0.78	-2.018e-05\\
        0.79	-1.265e-05\\
        0.8	2.748e-06\\
        0.81	1.227e-05\\
        0.82	2.706e-05\\
        0.83	3.353e-05\\
        0.84	4.271e-05\\
        0.85	4.184e-05\\
        0.86	4.252e-05\\
        0.87	3.337e-05\\
        0.88	2.675e-05\\
        0.89	1.259e-05\\
        0.9	3.614e-06\\
        0.91	-9.677e-06\\
        0.92	-1.512e-05\\
        0.93	-2.253e-05\\
        0.94	-2.102e-05\\
        0.95	-2.114e-05\\
        0.96	-1.335e-05\\
        0.97	-8.645e-06\\
        0.98	1.697e-06\\
        0.99	6.881e-06\\
      };
    \addplot [color=black, dashed, line width=0.5pt, forget plot]
    table[row sep=crcr]{%
        0	0.0001\\
        1	0.0001\\
      };
    \addplot [color=black, dashed, line width=0.5pt, forget plot]
    table[row sep=crcr]{%
        0	-0.0001\\
        1	-0.0001\\
      };
  \end{axis}
\end{tikzpicture}%
    \caption{Discrete total energy \eqref{discrete_total_energy} increments}
    \label{fig:springPendulum_total_energy_diff}
\end{figure}
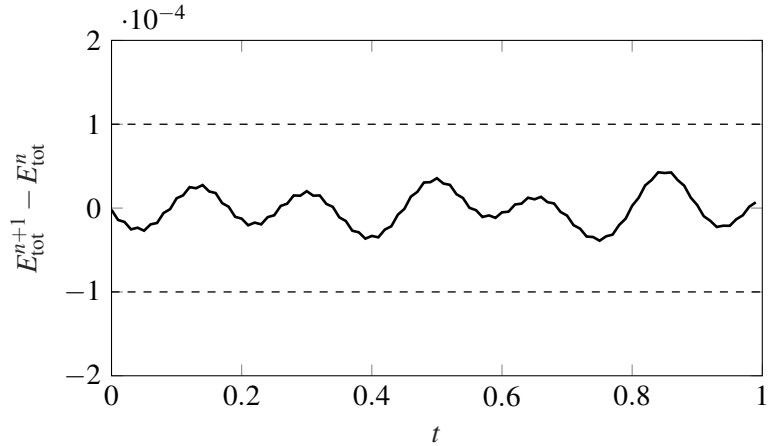

\begin{figure}[htb]
    \centering
    \setlength{\figH}{0.18\textheight}
    \setlength{\figW}{0.6\textwidth}
%
%
\begin{tikzpicture}

  \begin{axis}[%
    width=0.951\figW,
    height=\figH,
    at={(0\figW,0\figH)},
    scale only axis,
    xmin=0,
    xmax=1,
    xlabel style={font=\color{white!15!black}},
    xlabel={$t$},
    ymin=-2e-14,
    ymax=2e-14,
    ylabel style={font=\color{white!15!black}},
    ylabel={$E^{n+1}-E^{n}$},
    axis background/.style={fill=white}
    ]
    \addplot [color=black, line width=1pt, forget plot]
    table[row sep=crcr]{%
        0	-2.22e-16\\
        0.01	1.776e-15\\
        0.02	1.332e-15\\
        0.03	4.441e-16\\
        0.04	8.882e-16\\
        0.05	2.22e-16\\
        0.06	0\\
        0.07	0\\
        0.08	2.22e-16\\
        0.09	-2.22e-16\\
        0.1	-2.22e-16\\
        0.11	-2.22e-16\\
        0.12	0\\
        0.13	-2.22e-16\\
        0.14	2.22e-16\\
        0.15	-6.661e-16\\
        0.16	0\\
        0.17	0\\
        0.18	-2.22e-16\\
        0.19	-1.11e-15\\
        0.2	0\\
        0.21	4.441e-16\\
        0.22	-8.882e-16\\
        0.23	4.441e-16\\
        0.24	0\\
        0.25	-6.661e-16\\
        0.26	4.441e-16\\
        0.27	-2.22e-16\\
        0.28	2.22e-16\\
        0.29	-1.11e-15\\
        0.3	1.11e-15\\
        0.31	8.882e-16\\
        0.32	-6.661e-16\\
        0.33	1.11e-15\\
        0.34	-4.441e-16\\
        0.35	-2.22e-16\\
        0.36	-1.11e-15\\
        0.37	-1.554e-15\\
        0.38	-4.441e-16\\
        0.39	-4.441e-16\\
        0.4	6.661e-16\\
        0.41	0\\
        0.42	2.22e-16\\
        0.43	2.22e-16\\
        0.44	-2.22e-16\\
        0.45	8.882e-16\\
        0.46	-8.882e-16\\
        0.47	2.22e-16\\
        0.48	-1.332e-15\\
        0.49	2.22e-16\\
        0.5	-2.22e-16\\
        0.51	-2.22e-16\\
        0.52	-6.661e-16\\
        0.53	6.661e-16\\
        0.54	8.882e-16\\
        0.55	-4.441e-16\\
        0.56	6.661e-16\\
        0.57	4.441e-16\\
        0.58	-6.661e-16\\
        0.59	4.441e-16\\
        0.6	-6.661e-16\\
        0.61	4.441e-16\\
        0.62	6.661e-16\\
        0.63	-4.441e-16\\
        0.64	-4.441e-16\\
        0.65	1.776e-15\\
        0.66	2.22e-16\\
        0.67	-1.776e-15\\
        0.68	8.882e-16\\
        0.69	-1.998e-15\\
        0.7	1.332e-15\\
        0.71	1.554e-15\\
        0.72	8.882e-16\\
        0.73	-6.661e-16\\
        0.74	2.22e-16\\
        0.75	0\\
        0.76	-2.22e-16\\
        0.77	-2.22e-16\\
        0.78	0\\
        0.79	8.882e-16\\
        0.8	2.22e-16\\
        0.81	6.661e-16\\
        0.82	2.22e-16\\
        0.83	-2.22e-16\\
        0.84	1.11e-15\\
        0.85	-8.882e-16\\
        0.86	4.441e-16\\
        0.87	-2.22e-16\\
        0.88	-4.441e-16\\
        0.89	4.441e-16\\
        0.9	-2.22e-16\\
        0.91	-4.441e-16\\
        0.92	1.332e-15\\
        0.93	2.22e-16\\
        0.94	-4.441e-16\\
        0.95	-2.22e-16\\
        0.96	2.22e-16\\
        0.97	2.22e-16\\
        0.98	1.11e-15\\
        0.99	-4.441e-16\\
      };
    \addplot [color=black, dashed, line width=0.5pt, forget plot]
    table[row sep=crcr]{%
        0	1e-14\\
        1	1e-14\\
      };
    \addplot [color=black, dashed, line width=0.5pt, forget plot]
    table[row sep=crcr]{%
        0	-1e-14\\
        1	-1e-14\\
      };
  \end{axis}
\end{tikzpicture}%
    \caption{Energy function \eqref{discrete_energy_function} increments}
    \label{fig:springPendulum_energy_function_diff}
\end{figure}
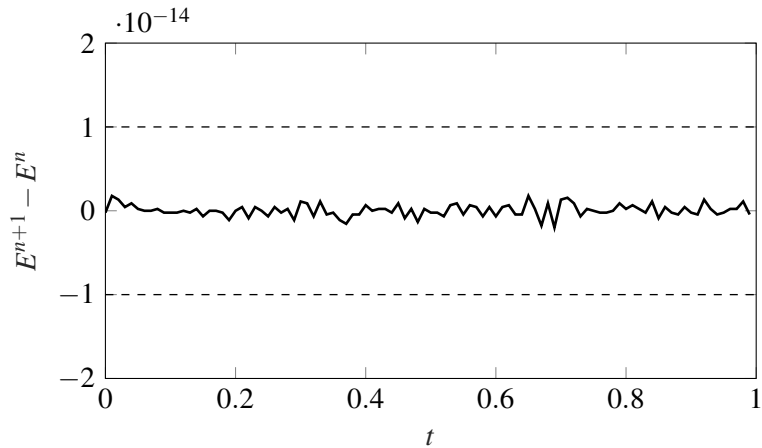

\section{CONCLUSION \& OUTLOOK}\label{sec_conclusion}

The present approach to the numerical integration of dynamical systems is based on Livens principle (cf. \cite{livens_hamiltons_1919}). This variational principle is characterized by combining Lagrangian and Hamiltonian viewpoints on mechanics. Thus, the need for the inversion of the mass matrix is avoided, which allows the simulation of systems with singular mass matrix. Moreover, the presented framework takes account of configuration-dependent mass matrices. A generalized energy function is introduced, which is identical to the total energy and can be linked to the Hamiltonian if the inverse mass matrix exists.

The framework has been extended to mechanical systems subject to holonomic constraints, and can thus be linked to previous works \cite{kinon_ggl_2023, kinon_structure_2023}, where a new variational principle has been introduced. The restrictions therein to constant and non-singular mass matrices have therefore been overcome.

Based on this novel formulation, a structure-preserving discretization has been applied and the emanating time-stepping scheme features both velocity and momentum quantities. As the scheme makes use of Gonzalez discrete gradients (see, e.g. \cite{gonzalez_time_1996}), the proposed integrator discretely covers the conservation of the generalized energy (which differs from the total energy of the system if the mass matrix is configuration-dependent) and aims at the preservation of momentum maps corresponding to the system's symmetries. The integrator satisfies the constraint functions down to computer precision. Moreover, due to the relation with the midpoint rule, the scheme is at most second-order accurate.

The newly proposed structure-preserving integration scheme enhances the method presented by Gonzalez \cite{gonzalez_mechanical_1999} with respect to the formulation in a more general (not necessarily Hamiltonian) framework, which makes possible the energy-consistent simulation of systems with configuration-dependent or singular mass matrices.

The numerical properties of the present method have been demonstrated in representative examples of multibody dynamics: the well-known spring pendulum in spherical coordinates and a mass-spring system with singular mass matrix taken from \cite{udwadia06}.

It should be noted that in the present work we have focussed on the design of an energy-preserving scheme. The development of variational integrators based on Livens principle with configuration dependent or singular mass matrices is interesting, as well. Especially the design of higher order methods by following the ideas presented in Wenger et al. \cite{wenger_construction_2017} might be in the scope of future research . In this connection, the approach recently developed in Altmann \& Herzog \cite{altmann22} should also be applicable.

\section*{ACKNOWLEDGMENTS}
We are thankful for insightful discussions with Felix Zähringer and Moritz Hille. Financial support for this work by the DFG (German Research Foundation) – project numbers 227928419 and 442997215 - is greatfully acknowledged.
\bibliographystyle{splncs} 	
\bibliography{MyBibliography}

\end{document}